\newcommand{\p}[1]{\mathbf{Pr}\left[{#1}\right]}
\newcommand{\e}[1]{\mathbf{E}\left[{#1}\right]}
\renewcommand{\v}[1]{\mathbf{Var}\left[{#1}\right]}
\theoremstyle{plain}
\newtheorem{theorem}{Theorem}
\newtheorem{lemma}[theorem]{Lemma}
\newtheorem{proposition}[theorem]{Proposition}
\theoremstyle{definition}
\theoremstyle{remark}
\title{On the Stretch Factor of \\ Randomly Embedded Random Graphs}
\date{}
\author{Abbas Mehrabian\thanks{email: \texttt{amehrabi@uwaterloo.ca}}\and Nick Wormald\thanks{Supported by the Canada Research Chairs Program and NSERC. email: \texttt{nwormald@uwaterloo.ca}} \\
{\small {Department of Combinatorics and Optimization, University of Waterloo} }}
\begin{document}

\maketitle

\begin{abstract} 
We consider a random graph ${\cal G}(n,p)$ whose vertex set $V$ has been randomly embedded in the unit square and whose edges are given weight equal to the geometric distance between their end vertices. Then each pair $\{u,v\}$ of vertices have a distance in the weighted graph, and a Euclidean distance. The stretch factor of the embedded graph is defined as the maximum ratio of these two distances, over all $\{u,v\}\subseteq V$. We give upper and lower bounds on the stretch factor (holding asymptotically almost surely), and show that for $p$ not too close to 0 or 1, these bounds are best possible in a certain sense.
Our results imply that the stretch factor is bounded with probability tending to 1 if and only if $n(1-p)$ tends to 0,
answering a question of O'Rourke.
\end{abstract}

\section{Introduction}

Let $G$ be a graph embedded in the plane.
For every two points $u$ and $v$, let $d(u,v)$ denote their Euclidean distance.
Make $G$ weighted by putting weight $d(u,v)$ on every edge $uv$.
For two vertices $u$ and $v$, let $d_G(u,v)$ denote
their shortest-path distance on (weighted) $G$.
The \emph{stretch factor} of $G$ is defined as
$$\max \: \frac{d_G(u,v)}{d(u,v)}\:,$$
where the maximum is taken over all vertices $u,v$.
If $G$ is disconnected then its stretch factor is undefined.

The stretch factor (also known as the \emph{spanning ratio} or the \emph{dilation}) is a
well studied parameter in discrete geometry, see for instance the book~\cite{spanners_book}
or the recent survey~\cite{spanners_survey}.
An important problem in this context is the following.
Given $n$ points on the plane, find a set of $O(n)$ pairs of them, 
such that when you create a geometric graph by adding
the segments joining the points in each pair, 
this geometric graph has bounded stretch factor.
A possible approach is to choose a random set of pairs.
Suppose that we randomly choose $M$ distinct pairs from the set of all $\binom{n}{2}$ pairs of points,
and add the corresponding edges.
Then, one can ask, how large should $M$ be to guarantee that the stretch factor is bounded, with probability tending to 1?
In this paper we show that if the initial points are chosen uniformly at random from the unit square,
then we need almost all edges to guarantee a bounded stretch factor,
hence this method is inefficient.

The setting is as follows.
Select $n$ points uniformly at random from the unit square,
and then form a random geometric graph $G$ on these points by joining each pair
independently with probability $p$, where $p$ is in general a function of $n$.
This is not a ``random geometric graph'' in the sense of Penrose~\cite{penrose},
because points are joined without regard to their geometric distance.
However, one can call this a \emph{randomly embedded random graph},
since you get the same thing if you start from an Erd\H{o}s-R\'{e}nyi random graph with parameters $n,p$
and embed each of its vertices into a random point in the unit square.
The stretch factor of $G$ is a random variable and we denote it by $\mathcal{F}(n,p)$.
We study the asymptotic behaviour of $\mathcal{F}(n,p)$ when $n$ is large,
and give probabilistic lower and upper bounds for it.
In the following, \emph{asymptotically almost surely} means with probability $1-o(1)$,
where the asymptotics is with respect to $n$.

In the open problem session of CCCG 2009~\cite{open},
O'Rourke asked the following question:
for what range of $p$ is $\mathcal{F}(n,p)$ bounded asymptotically almost surely?
As a conclusion of our bounds,
we answer this question as follows.
Let $\lambda > 1$ be any fixed constant, and note that $\omega(1)$ denotes a function that tends to infinity as $n$ grows.
If $n(1-p) = \omega(1)$, then
asymptotically almost surely $\mathcal{F}(n,p) > \lambda$.
If $n(1-p) = \Theta(1)$, then
$\mathcal{F}(n,p) > \lambda$ with probability $\Omega(1)$.
Finally, if $n(1-p) = o(1)$, then
asymptotically almost surely $\mathcal{F}(n,p) < \lambda$.

Our main lower bound is the following theorem.

\begin{theorem}
\label{thm:lower_bound}
Let $w(n) = \omega(1)$.
Then asymptotically almost surely
$${\mathcal{F}(n,p)} > \frac{\sqrt{n(1-p)}}{w(n)}\:.$$
\end{theorem}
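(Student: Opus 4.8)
The plan is to exhibit, a.a.s., a single \emph{non-adjacent} pair of vertices $u,v$ lying extremely close together in the plane, but whose local neighbourhoods carry no incident edges, so that any $u$--$v$ path is forced to make a long first and last hop. Write $q=1-p$. First I would dispose of two trivial ranges. If $nq$ stays bounded then $\sqrt{nq}/w(n)=o(1)<1\le\mathcal{F}(n,p)$, so assume $nq\to\infty$. Next, replacing $w(n)$ by $\tilde w(n)=\min\{w(n),\tfrac12\sqrt{nq}\}$ only strengthens the statement (since $\sqrt{nq}/\tilde w\ge\sqrt{nq}/w$) and keeps $\tilde w=\omega(1)$; so I may also assume $w(n)\le\tfrac12\sqrt{nq}$. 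Now set
$$r=\bigl(\pi n(1-q^2)\bigr)^{-1/2},\qquad \delta=r\,w(n)/\sqrt{nq},$$
so that $\delta\le r/2$. For a pair $\{u,v\}$ let $A_{uv}$ be the event that (i) $d(u,v)\le\delta$, (ii) $uv\notin E(G)$, and (iii) every vertex $z\notin\{u,v\}$ with $d(z,u)\le r$ satisfies $uz\notin E(G)$, and every such $z$ with $d(z,v)\le r$ satisfies $vz\notin E(G)$.

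The deterministic payoff of $A_{uv}$ is immediate: since $uv\notin E(G)$, every $u$--$v$ path uses at least two edges, and by (iii) its first edge (leaving $u$) and its last edge (entering $v$) are distinct and each have length at least $r$; hence $d_G(u,v)\ge 2r$. Therefore on $A_{uv}$ the pair realises a ratio $d_G(u,v)/d(u,v)\ge 2r/\delta=2\sqrt{nq}/w(n)>\sqrt{nq}/w(n)$. Thus it suffices to prove that a.a.s.\ some $A_{uv}$ occurs (with $u,v$ in a common component, which is automatic in the range where $\mathcal{F}(n,p)$ is defined, since $G$ is then a.a.s.\ connected).

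The first-moment step is a direct computation. Let $X$ be the number of pairs $\{u,v\}$ with $A_{uv}$. Conditioning on the independent point locations and edge indicators, $\p{A_{uv}}$ factors as the location probability $\p{d(u,v)\le\delta}=(1+o(1))\pi\delta^2$, times the non-edge probability $q$, times a ``neighbourhood-emptiness'' factor. Because $\delta\le r/2$, the radius-$r$ disks about $u$ and $v$ nearly coincide (their union has area $(1+o(1))\pi r^2$), so to leading order a point landing in this region must be non-adjacent to \emph{both} endpoints, an event of probability $q^2$; the emptiness factor is therefore $(1-\pi r^2(1-q^2)+o(\pi r^2))^{\,n-2}=(1+o(1))e^{-\pi n r^2(1-q^2)}=(1+o(1))e^{-1}$ by the choice of $r$. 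Summing over the $\binom n2$ pairs yields $\e{X}=(1+o(1))\,w(n)^2/\bigl(2e(1-q^2)\bigr)$, which tends to infinity since $1-q^2\in(0,1]$ and $w(n)=\omega(1)$.

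Finally I would run a second-moment argument, showing $\e{X^2}=(1+o(1))\e{X}^2$ and concluding $X>0$ a.a.s.\ by Chebyshev's inequality. The dominant contribution to $\e{X^2}$ comes from pairs of pairs that are vertex-disjoint and geometrically well separated (at distance $\gg r$), for which $A_{uv}$ and $A_{u'v'}$ are independent and contribute exactly $(1+o(1))\e{X}^2$. The real work, and the main obstacle, is bounding the remaining configurations: (a) two pairs sharing a vertex, and (b) two disjoint pairs whose radius-$r$ disks overlap. In each such case one coincidence costs a factor of order $nr^2=\Theta\bigl(1/(1-q^2)\bigr)$ in the count of configurations while the joint probability does not grow enough to compensate, so both contributions are $o(\e{X}^2)$; this requires careful handling of the overlapping-disk geometry and of boundary effects near $\partial[0,1]^2$, the latter dealt with by restricting to pairs well inside the square, which comprise a $1-o(1)$ fraction. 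With these estimates in place, $X>0$ a.a.s., giving a pair with $\mathcal{F}(n,p)\ge 2r/\delta>\sqrt{n(1-p)}/w(n)$, as required.
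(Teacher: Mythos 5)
Your deterministic step is sound (on $A_{uv}$ every $u$--$v$ walk pays more than $r$ for each of its first and last edges, so the ratio exceeds $2\sqrt{nq}/w(n)$), and the reductions and the first-moment calculation are essentially right --- although the emptiness factor is $(1+o(1))e^{-1}$ only when $\delta=o(r)$; your truncation $w\le\tfrac12\sqrt{nq}$ allows $\delta=\Theta(r)$, in which case a constant fraction of $C(u,r)\cup C(v,r)$ is covered by only one of the two discs and contributes $q$ rather than $q^2$ per point. That only perturbs the constant and $\e{X}=\Theta(w(n)^2)\to\infty$ survives. The genuine gap is the second moment, which you explicitly defer as ``the real work.'' The heuristic you offer (one coincidence costs a factor of order $nr^2$ in the count but is not compensated in probability) is not sufficient: carrying out the computation, the vertex-disjoint overlapping-disc configurations contribute $O\bigl(\e{X}^2/(n(1-q^2))\bigr)=O\bigl(\e{X}^2/(np)\bigr)$, which is $o(\e{X}^2)$ only when $np\to\infty$. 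When $np=O(1)$ one has $r=\Theta(1)$, so \emph{no} two pairs are geometrically well separated, the ``dominant independent term'' is empty, and the proposed decomposition of $\e{X^2}$ collapses. You would need to dispose of that regime separately (e.g.\ by noting $G$ is then a.a.s.\ disconnected, so the statement is vacuous), and then actually verify the near-independence of the main term (the emptiness events for separated pairs are negatively correlated, which helps, but the shared potential edges such as $uu'$ when discs overlap create positive correlation that must be bounded). As written this is a plausible plan rather than a proof.

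For contrast, the paper sidesteps the second moment entirely: it first locates $\Omega(n)$ \emph{disjoint} discs of radius $1/\sqrt n$, each containing exactly its centre and one other vertex, via a staged embedding (Lemmas~\ref{lem:isolates} and~\ref{lem:twopointregions}); the events ``this disc produces a bad pair'' are then mutually independent across discs by construction, so the probability that all $\Omega(n)$ trials fail is $\exp\bigl(-\Omega(n(1-p)/\lambda^2)\bigr)$. Besides avoiding all correlation estimates, this yields the explicit tail bound of Lemma~\ref{lem:lower_bound}, which is reused to prove Theorem~\ref{thm:tightness}; a bare second-moment argument would deliver only an $o(1)$ failure probability.
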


Let $\lambda$ be fixed.
This theorem implies that if $n(1-p) = \omega(1)$, then
asymptotically almost surely $\mathcal{F}(n,p) > \lambda$.
This strengthens the result of the first author~\cite{conference},
who proved the same thing for $p < 1 - \Omega(1)$.

Let $CON$ denote the event {``$G$ is connected.''}
Recall that if the graph is disconnected, then its stretch factor is undefined.
For any $p<1$, this happens with a positive probability, hence $\mathbf{E}[\mathcal{F}(n,p)]$
is undefined.
It is then natural to bound $\mathbf{E} [ \mathcal{F}(n,p) | CON ]$ instead.

Our main upper bound is the following theorem.

\begin{theorem}
\label{thm:upper_bound}
Let $p^2n \geq 33 \log n$ and let $w(n)=\omega(1)$. Then, asymptotically almost surely we have
$$\mathcal{F}(n,p) \leq 1 + \frac {w(n) \sqrt{n(1-p)}}{p} \:.$$
If $p^2n \geq 113 \log n$, then
$$\e{\mathcal{F}(n,p)|\:CON} \leq 1 + {\frac{\sqrt{2048 n (1-p) }}{p}} + o(1)\:.$$
\end{theorem}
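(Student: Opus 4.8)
The plan is to prove both statements from a single quantitative estimate on the \emph{individual} stretch of a pair. Fix two vertices $u,v$ and write $d=d(u,v)$. The first observation is that a pair can only violate the bound if it is non-adjacent, since an edge forces $d_G(u,v)=d$ (as $d_G\ge d$ always holds by the triangle inequality). The second is that the cheapest repair is a two-hop detour: if some third vertex $w$ is adjacent to both $u$ and $v$ and lies in the ellipse $E_L$ with foci $u,v$ and major axis $L$, then $d_G(u,v)\le d(u,w)+d(w,v)\le L$. Taking $L=t\,d$ reduces the whole problem to estimating the chance that $E_{td}$ contains no common neighbour. Since each of the other $n-2$ vertices lands in $E_{td}$ independently with probability equal to its area $q\approx\frac{\pi}{4}(td)^2$ (for $t\gg1$, up to a bounded boundary factor) and is a common neighbour with probability $p^2$, I would obtain
\[
\p{\{u,v\}\text{ has stretch}>t}\le (1-p)\,(1-qp^2)^{\,n-2}\le (1-p)\,e^{-\Omega\left(n t^2 d^2 p^2\right)}.
\]

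For the first statement I would set $t=w(n)\sqrt{n(1-p)}/p$, so that the exponent becomes $\Omega\!\left(w(n)^2 n^2(1-p)d^2\right)$. Summing the per-pair bound over all $\binom{n}{2}$ pairs and integrating against the law of $d$ — whose density behaves like $2\pi d$ as $d\to0$, which is exactly the regime that dominates the integral — gives
\[
\e{\#\{\text{bad pairs}\}}\le \binom{n}{2}(1-p)\,\e{e^{-\Omega\left(w(n)^2 n^2(1-p)d^2\right)}}=O\!\left(w(n)^{-2}\right)=o(1),
\]
and Markov's inequality then yields that a.a.s.\ no pair is bad. Here the hypothesis $p^2 n\ge 33\log n$ serves two purposes: it forces $np\gg\log n$, so that $G$ is a.a.s.\ connected and $\mathcal{F}(n,p)$ is a.a.s.\ defined, and it keeps the relevant ellipses populated by their expected number of points.

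For the expectation bound I would read the same computation as a tail estimate, $\p{\mathcal{F}(n,p)>t}\le C\,n(1-p)/(p^2 t^2)$ for $t$ above an absolute constant, and integrate it: $\e{\mathcal{F}(n,p)\cdot\mathbf{1}_{CON}}\le\int_0^\infty \p{\mathcal{F}(n,p)>t}\,dt\le t^*+\int_{t^*}^\infty C\,n(1-p)/(p^2 t^2)\,dt$. Choosing $t^*\asymp\sqrt{n(1-p)}/p$ balances the two contributions and produces a bound of the shape $\mathrm{const}\cdot\sqrt{n(1-p)}/p$; tracking the constants through the ellipse-area estimate, the exponential bound, and the distance integral is what pins the coefficient at $\sqrt{2048}$. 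Dividing by $\p{CON}$ converts this into the conditional expectation. The stronger hypothesis $p^2 n\ge 113\log n$ is exactly what makes $\p{\overline{CON}}$ smaller than any inverse polynomial, so that conditioning only multiplies the bound by $1+o(1)$ and the rare disconnected or huge-stretch configurations contribute merely $o(1)$.

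The main obstacle is the quantitative tail estimate with sharp constants. I expect the delicate points to be (i) the ellipse protruding outside the unit square, costing a bounded factor near the boundary, (ii) replacing the true area by $\frac{\pi}{4}(td)^2$ and the binomial by its exponential bound while keeping the constants honest, and (iii) integrating against the exact small-$d$ distance density; each of these leaks a constant, and assembling them so as to land precisely at $2048$ is the fiddly part. The second, more structural, difficulty is controlling the conditional expectation — ensuring that the heavy $t^{-2}$ tail together with the event $\overline{CON}$ contribute only $o(1)$ — and this is precisely the point at which the gap between the constants $33$ and $113$ appears.
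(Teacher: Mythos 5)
Your argument for the a.a.s.\ bound is essentially the paper's: a bad pair must be non-adjacent and must have no common neighbour lying close to the segment $uv$, and one integrates the resulting per-pair probability against the density of $d(u,v)$ (which is indeed $\le 2\pi d\,\mathrm{d}d$, with the boundary loss absorbed by a Proposition-\ref{prop:intersection}-type estimate) and finishes with a first-moment/union bound over the $\binom n2$ pairs. The paper uses the disc $C(u,\lambda t)$ rather than your ellipse, and first applies a Chernoff bound to guarantee $p^2n/4$ common neighbours before using their positions, whereas you fold the edge and position randomness into the single factor $(1-qp^2)^{n-2}$; these are cosmetic differences and your version of the per-pair estimate is, if anything, cleaner. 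This part is sound.

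The expectation part has a genuine gap. The tail bound $\p{F>t}\le Cn(1-p)/(p^2t^2)$ does \emph{not} hold uniformly for $t\ge t^*$: once $td$ is of order $1$ the ellipse saturates inside the unit square, its area stops growing, and the per-pair bound bottoms out at roughly $(1-p)(1-cp^2)^{n-2}$, independently of $t$. After the union bound this leaves an additive term of order $n^2e^{-\Omega(p^2n)}$ in your estimate of $\p{F>t}$ that does not decay as $t\to\infty$, and integrating a positive constant over $[t^*,\infty)$ diverges. The paper's fix is to truncate the integral at $t=n^5$ and to handle $t>n^5$ by a different argument: conditional on connectivity every graph distance is at most $n\sqrt2$, so $F>t$ forces a non-adjacent pair at Euclidean distance less than $n\sqrt2/t$, an event of probability $O(n^4(1-p)/t^2)$, which integrates to $o(1)$ over $[n^5,\infty)$. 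You need this (or some substitute); ``the huge-stretch configurations contribute merely $o(1)$'' is exactly the claim that must be proved. Relatedly, your explanation of the constant $113$ is off: already $p^2n\ge 33\log n$ forces $pn\ge\sqrt{33\,n\log n}$, so $\p{\overline{CON}}$ is superpolynomially small under either hypothesis. The stronger hypothesis is needed because the $t$-independent term $n^2e^{-p^2n/16}$ gets multiplied by the length $\sim n^5$ of the truncated integration range, and $113>16\cdot 7$ is what makes $n^7e^{-p^2n/16}=o(1)$. Finally, your route to the conditional expectation (bound $\e{F\cdot\mathbf{1}_{CON}}$ and divide by $\p{CON}$) requires the quantitative estimate $\p{\overline{CON}}\cdot\sqrt{n(1-p)}/p=o(1)$ in order to preserve the constant $2048$; the paper instead uses the Harris/FKG correlation inequality ($\{F>\lambda\}$ is decreasing, $CON$ is increasing, hence $\p{F>\lambda\:|\:CON}\le\p{F>\lambda}$), which requires no estimate on $\p{CON}$ at all. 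Your route can be made to work, but the correlation inequality is the cheaper tool here.
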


When $n(1-p) = o(1)$, this theorem implies that for any fixed $\epsilon>0$,
we have that $\e{\mathcal{F}(n,p) | CON} \leq 1+\epsilon$, and asymptotically almost surely $\mathcal{F}(n,p) \leq 1 + \epsilon$.

The more interesting case is when $n (1-p) = \Theta(1)$.
In this regime Theorem~\ref{thm:upper_bound} states that $\e{\mathcal{F}(n,p)|\:CON} = O(1)$.
So, one may wonder if there is a constant $\lambda$ such that asymptotically almost surely $\mathcal{F}(n,p) < \lambda$.
However, Lemma~\ref{lem:lower_bound} (which is the main lemma in the proof of Theorem~\ref{thm:lower_bound})
implies that this is not the case:
for any fixed $\lambda$, with probability $\Omega(1)$ we have $CON$ and $\mathcal{F}(n,p) > \lambda$.
In other words, the random variable $\mathcal{F}(n,p)$ is not concentrated.
In this case one might expect that the distribution of $\mathcal{F}(n,p)$ tends to some nontrivial limit if $n(1-p)$ is constant.

Lemma~\ref{lem:lower_bound} actually implies that for a wide range of $p$, 
the first conclusion of Theorem~\ref{thm:upper_bound} is tight, in the sense that $w(n)$ cannot be replaced with a constant.
Namely, the following is true.

\begin{theorem}
\label{thm:tightness}
Assume that $p=\Omega(1)$ and $n(1-p)=\Omega(1)$. There is no absolute constant $C$ for which asymptotically almost surely
$$\mathcal{F}(n,p) \leq \frac {C \sqrt{n(1-p)}}{p} +1\:.$$
\end{theorem}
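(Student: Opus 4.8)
The plan is to prove the statement by negating the a.a.s.\ assertion directly: since ``a.a.s.\ $\mathcal{F}(n,p) \le \frac{C\sqrt{n(1-p)}}{p}+1$'' means $\p{\mathcal{F}(n,p) > \frac{C\sqrt{n(1-p)}}{p}+1} \to 0$, it suffices to show that for \emph{every} fixed constant $C$ this probability stays bounded away from $0$ along all large $n$. I would extract this from Lemma~\ref{lem:lower_bound} in its quantitative form, i.e.\ the statement that for each constant $\beta$ there is a constant $\gamma=\gamma(\beta)>0$ with $\p{CON \text{ and } \mathcal{F}(n,p) > \beta\sqrt{n(1-p)}} \ge \gamma$ for all large $n$. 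The mechanism I would rely on (and which I expect underlies Lemma~\ref{lem:lower_bound}) is a close non-adjacent pair $u,v$ lying in an otherwise empty disk of radius $\Theta(1/\sqrt n)$: emptiness forces $d_G(u,v)=\Omega(1/\sqrt n)$, while $d(u,v)$ can be pushed down to a small constant times $1/(n\sqrt{1-p})$, giving a stretch of order $\sqrt{n(1-p)}$ with the multiplicative constant tunable.

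Granting that form of Lemma~\ref{lem:lower_bound}, the reduction is a short constant-chase. Write $p \ge p_0 > 0$ and $n(1-p) \ge c_0 > 0$, using the hypotheses $p=\Omega(1)$ and $n(1-p)=\Omega(1)$. Given $C$, I would set $\beta = \frac{C}{p_0} + \frac{2}{\sqrt{c_0}}$; then $\beta - \frac{C}{p} \ge \beta - \frac{C}{p_0} = \frac{2}{\sqrt{c_0}}$, so $\big(\beta - \tfrac{C}{p}\big)\sqrt{n(1-p)} \ge \frac{2}{\sqrt{c_0}}\cdot\sqrt{c_0} = 2$, whence
\[
\beta\sqrt{n(1-p)} \;\ge\; \frac{C\sqrt{n(1-p)}}{p} + 2 \;>\; \frac{C\sqrt{n(1-p)}}{p} + 1
\]
for all large $n$. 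Applying Lemma~\ref{lem:lower_bound} with this $\beta$ yields a constant $\gamma=\gamma(\beta)>0$ with
\[
\p{\mathcal{F}(n,p) > \tfrac{C\sqrt{n(1-p)}}{p}+1} \;\ge\; \p{CON \text{ and } \mathcal{F}(n,p) > \beta\sqrt{n(1-p)}} \;\ge\; \gamma .
\]
Thus the left-hand probability does not tend to $0$, so the claimed a.a.s.\ upper bound fails; as $C$ was arbitrary, no absolute constant $C$ works, which is exactly the theorem.

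The hard part is entirely inside Lemma~\ref{lem:lower_bound}: what I genuinely need is a lower bound that is an \emph{arbitrarily large} constant multiple of $\sqrt{n(1-p)}$ with probability still bounded below, rather than just $\sqrt{n(1-p)}$ itself. The weaker consequence ``for any fixed $\lambda$, $\p{CON \text{ and } \mathcal{F}(n,p) > \lambda} = \Omega(1)$'' suffices only in the bounded regime $n(1-p)=\Theta(1)$ and is too weak once $n(1-p)\to\infty$, which is why the parametrised version is essential. Accordingly, the step I expect to carry the real content is the trade-off in the construction: shrinking the target Euclidean distance $d(u,v)$ by a constant factor $1/t$ multiplies the stretch by $t$ while costing only a constant-factor (roughly $t^{-2}$) loss in the probability, so $\gamma(\beta)$ remains a positive constant for each fixed $\beta$. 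The two hypotheses enter precisely here, to absorb the factor $1/p$ and the additive $1$ into the single constant $\beta$.
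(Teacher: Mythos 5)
Your proposal is correct and is essentially the paper's own proof: the paper derives Theorem~\ref{thm:tightness} from Lemma~\ref{lem:lower_bound} by setting $\lambda = C'\sqrt{n(1-p)}$ for a suitable constant $C'$ (your $\beta$), using $p=\Omega(1)$ and $n(1-p)=\Omega(1)$ exactly as in your constant-chase. You also correctly identified both the quantitative form of Lemma~\ref{lem:lower_bound} that is needed and the empty-disc mechanism behind it.
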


There is a nontrivial gap between our lower and upper bounds when $p=o(1)$.
It remains open to determine which of the bounds are closer to the correct answer in this regime.

The following notation will be used in the rest of the paper.
For a point $Q$ and nonnegative real $R$,
$C(Q, R)$ denotes the intersection of the disc with centre $Q$ and radius $R$ and the unit square, and
$F$ simply denotes $\mathcal{F}(n,p)$.
We often identify each vertex with the point it has been embedded into.
All logarithms are in natural base.

\section{The Lower Bound}
In this section we prove Theorems~\ref{thm:lower_bound}~and~\ref{thm:tightness}.
First we need an easy geometric result.

\begin{proposition}
\label{prop:intersection}
Let $Q$ be a point in the unit square.
If $0 \leq R \leq 1/2$,
then $C(Q,R)$ has area at least $\pi R ^2 / 4$.
If $0 \leq R \leq \sqrt 2$,
then $C(Q,R)$ has area at least $\pi R ^2 / 32$.
\end{proposition}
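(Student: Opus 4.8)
The plan is to reduce both bounds to a single geometric observation: for any $Q=(a,b)$ in the unit square and any $R\le 1/2$, at least one of the four quarter-discs of radius $R$ centred at $Q$ lies entirely inside the square. First I would examine the horizontal direction. The quarter-disc opening to the right stays in the square exactly when $a\le 1-R$, and the one opening to the left stays in exactly when $a\ge R$. Since $R\le 1/2$ gives $R\le 1-R$, the two intervals $[0,1-R]$ and $[R,1]$ cover all of $[0,1]$, so for any $a\in[0,1]$ at least one horizontal half-direction is admissible; the identical argument applied to $b$ gives an admissible vertical half-direction. Because the horizontal extent constraint involves only $a$ and the vertical one only $b$, these choices are independent, and combining them selects a quadrant whose quarter-disc is contained in $C(Q,R)$. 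That quarter-disc has area $\pi R^2/4$, which proves the first assertion.

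For the second assertion I would split on the size of $R$. When $R\le 1/2$ there is nothing further to do, since the first part already yields area at least $\pi R^2/4\ge \pi R^2/32$. When $1/2\le R\le\sqrt 2$, I would invoke monotonicity: the disc of radius $R$ contains the disc of radius $1/2$, so $C(Q,R)\supseteq C(Q,1/2)$ and hence the area of $C(Q,R)$ is at least that of $C(Q,1/2)$, which by the first part is at least $\pi(1/2)^2/4=\pi/16$. Since $R\le\sqrt 2$ forces $\pi R^2/32\le \pi\cdot 2/32=\pi/16$, the desired lower bound follows.

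The only real content lies in the covering argument of the first paragraph, and the point that needs care there is confirming that a \emph{full} quarter-disc, not merely a half-disc, can always be placed inside the square when $R\le 1/2$ — that is, that an admissible horizontal direction and an admissible vertical direction can be chosen simultaneously. Everything afterwards is monotonicity of area in $R$ together with the elementary inequality $R^2\le 2$; no probabilistic tools are required.
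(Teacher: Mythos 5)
Your proof is correct and follows essentially the same route as the paper's: the paper's ``by symmetry, assume $Q$ lies in the upper left quarter of the square, so the bottom right quarter-disc fits'' is just a compressed version of your explicit covering argument for choosing an admissible quadrant, and the treatment of $1/2 < R \leq \sqrt 2$ via monotonicity and $C(Q,R) \supseteq C(Q,1/2)$ is identical. No gaps.
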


\begin{proof}
By symmetry, we may assume that $Q$ lies in the upper left quarter of the unit square.
If $0 \leq R \leq 1/2$, then the bottom right quarter of the disc with centre $Q$ and radius $R$
lies completely inside the unit square, and hence the intersection area is at least $\pi R^2 / 4 \geq \pi R^2 / 32$.
If $1/2 < R \leq \sqrt 2$, then $C(Q,R)$ contains $C(Q,1/2)$, so its area is at least
$$\frac{\pi \left(1/2\right)^2}{4} = \frac{\pi}{16} = \frac{\pi (\sqrt 2)^2}{32} \geq \frac{\pi R^2}{32}\:.\qedhere$$
\end{proof}

Let $c$ be such that $1/ 51 < c < 1/ 16\pi$ and $cn$ is an even integer.
Notice that here, as in the rest of the paper, we always mean $1/(ab)$ when we write $1/ab$.

\begin{lemma}
\label{lem:isolates}
Choose $cn$ points independently and uniformly at random from the unit square.
Build a graph $H$ on these vertices, by joining two vertices if their distance is at most $2 / \sqrt n$.
With probability at least $1 - O(1/n)$, $H$ has at least $cn / 2$ isolated vertices.
\end{lemma}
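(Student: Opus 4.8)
The plan is to control the number of isolated vertices via the number of edges of $H$. Set $r = 2/\sqrt{n}$, so that $r \le 1/2$ for all large $n$. For two points $U,V$ chosen independently and uniformly, conditioning on $U$ gives $\Pr[d(U,V)\le r \mid U] = \mathrm{area}(C(U,r)) \le \pi r^2 = 4\pi/n$ (the lower bound $\pi r^2/4$ from Proposition~\ref{prop:intersection} is not needed here). Hence, writing $Z$ for the number of edges of $H$,
\[
\e{Z} = \binom{cn}{2}\,\Pr[d(U,V)\le r] \le \frac{(cn)^2}{2}\cdot\frac{4\pi}{n} = 2\pi c^2 n,
\]
and since $c < 1/(16\pi)$ this yields $\e{Z} < cn/8$.

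The key observation is that the number $X$ of non-isolated vertices satisfies $X \le 2Z$, because each non-isolated vertex contributes at least $1$ to the degree sum $\sum_v \deg(v) = 2Z$. Thus it suffices to show that $Z \le cn/4$ with probability $1 - O(1/n)$, for then $X \le cn/2$ and at least $cn/2$ vertices are isolated. I would establish this concentration with Chebyshev's inequality, which requires a bound $\v{Z} = O(n)$.

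To bound the variance, write $Z = \sum_e \mathbf{1}_e$ over unordered pairs $e$, where $\mathbf{1}_e$ indicates that the two endpoints of $e$ lie within distance $r$. Two edge-indicators on vertex-disjoint pairs are independent, so their covariance vanishes; the only contributions to $\v{Z}$ come from the diagonal and from pairs of edges sharing exactly one vertex. The diagonal contributes $\sum_e \v{\mathbf{1}_e} \le \e{Z} = O(n)$. For a pair of edges $\{u,v\},\{u,w\}$ sharing the vertex $u$, conditioning on $u$ gives $\e{\mathbf{1}_{\{u,v\}}\mathbf{1}_{\{u,w\}}} = \e_u[\mathrm{area}(C(u,r))^2] \le (4\pi/n)^2$, so each such covariance is $O(1/n^2)$; as there are $O(n^3)$ such pairs, their total contribution is $O(n)$. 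Hence $\v{Z} = O(n)$.

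Finally, since $\e{Z} < cn/8$, Chebyshev's inequality gives
\[
\Pr\!\left[Z > cn/4\right] \le \Pr\!\left[\,Z - \e{Z} > cn/8\,\right] \le \frac{\v{Z}}{(cn/8)^2} = O(1/n),
\]
which completes the argument. The main obstacle is the variance computation, and specifically verifying that the covariance from edges sharing a vertex is genuinely $O(1/n^2)$ per pair, so that the $O(n^3)$ such pairs contribute only $O(n)$ in total; once $\v{Z} = O(n)$ is in hand, the slack $\e{Z} < cn/8$ afforded by the hypothesis $c < 1/(16\pi)$ makes the final Chebyshev step immediate.
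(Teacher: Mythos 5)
Your proposal is correct and follows essentially the same route as the paper: bound the expected number of edges by $2\pi c^2 n$ using the area estimate, show the edge count has variance $O(n)$ by splitting pairs of edge-indicators into vertex-disjoint (independent) and vertex-sharing ($O(n^3)$ pairs, each with second moment at most $(4\pi/n)^2$), apply Chebyshev, and conclude since each edge destroys at most two isolated vertices. The only cosmetic difference is that you threshold at $cn/4$ edges using the slack $c<1/(16\pi)$, whereas the paper thresholds at twice the mean; the computations are interchangeable.
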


\begin{proof}
Let $X$ be the number of edges of $H$.
Then we have $X = \sum_{i<j} X_{i,j}$, where $X_{i,j}$ is the indicator variable for the distance between
vertices $i$ and $j$ being at most $2 / \sqrt n$.
Note that if vertex $i$ has been embedded in point $p_i$,
then $X_{i,j} = 1$ if and only if vertex $j$ is embedded in $C(p_i, 2/\sqrt n)$.
So by Proposition~\ref{prop:intersection} we have
$$\pi / n \leq \e{X_{i,j}} \leq 4 \pi / n\:.$$
Let $q = \e{X_{i,j}}$, $q_2 = 4 \pi / n$, and $M = \binom{cn}{2}$.
Thus
$$\e{X} = q M  =\Theta(n) \:.$$

We claim that $\v{X} = O(n)$.
Note that if $\{i,j\}$ and $\{k,l\}$ are two disjoint sets of vertices, then
$$\e{X_{i,j}X_{k,l}} = \e{X_{i,j}}\e{X_{k,l}} = q^2\:,$$
and the number of such pairs of pairs equals $\binom{cn}{2}\binom{cn-2}{2} \leq M^2$.
Otherwise, let $j = l$.
Then for $X_{i,j}=X_{k,j}=1$ to happen, both vertices $i$ and $k$ should be embedded at distance
at most $2 / \sqrt n$ from where vertex $j$ has been embedded, the probability of which is not more than $q_2 ^ 2$.
Hence in this case
$$\e{X_{i,j}X_{k,l}} \leq q_2 ^ 2\:,$$
and the number of such pairs of pairs is not more than $(cn)^3$.
Consequently,
$$\e{X(X-1)} = \sum_{\{i,j\} \neq \{k,l\}} \e{X_{i,j} X_{k,l}} \leq M^2 q^2 + (cn)^3 q_2^2\:,$$
and so
$$\v{X} = \e{X(X-1)} + \e{X} - \e{X}^2 \leq M^2 q^2 + (cn)^3 q_2^2 + q M - q^2 M^2 = O(n) \:.$$

By Chebyshev's inequality,
$$\p { X > 2 \e{X} } \leq \frac{ \v{X}}{\e{X}^2} = O(1/n)\:.$$
Thus with probability $1 - O(1/n)$, $H$ has at most $2 \e{X} \leq 4 c^2 \pi n$ edges.
If this is the case, then it has at least
$$ cn - 8 c^2 \pi n \geq cn / 2$$
isolated vertices, and this completes the proof.
\end{proof}

Now, back to the main problem.
Assume that the $n$ vertices are embedded one by one, and the edges are exposed at the end.
Consider the moment when exactly $cn$ vertices have been embedded.
Build an auxiliary graph on these vertices, by joining two vertices if their distance is at most $2 / \sqrt n$.
By Lemma~\ref{lem:isolates}, with probability $1-O(1/n)$, this graph has at least $cn/2$ isolated vertices.
We condition on the embedding of the first $cn$ vertices such that this event holds.
Let $A$ be a set of $cn/2$ isolated vertices in this graph.
The vertices in $A$ are called the \emph{primary} vertices,
the vertices that are one of the first $cn$ vertices but are not in $A$ are called \emph{far} vertices,
and the vertices that have not been embedded yet are called the \emph{secondary} vertices.

Let $m = cn/2$ be the number of primary vertices, and $n' = n(1-c)$ be the number of secondary vertices.
A \emph{primary disc} is a set of the form $C(v, 1/\sqrt n)$, where $v\in A$;
the vertex $v$ is called the \emph{centre} of the primary disc.
So, we have $m$ primary discs, say $\mathcal{R}_1, \mathcal{R}_2, \dots, \mathcal{R}_m$.
Let $\mathcal{W}$ be the set of points of the unit square that are not contained in any primary disc.
Notice that by the definition of $A$, the primary discs are disjoint, and no far vertex is contained in any primary disc.

Consider the following process for embedding the secondary vertices and exposing all of the edges of $G$:
\begin{enumerate}
\item
Note that $\{\mathcal{W},\mathcal{R}_1,\mathcal{R}_2,\dots,\mathcal{R}_m\}$ is a partition of the unit square.
In the first phase, to each secondary vertex, independently, we randomly assign an element of the partition,
with probability proportional to the area of that element.
Thus for each primary disc it is known how many secondary vertices it contains,
but their exact position is not known.
\item
In the second phase, for each secondary vertex we choose a random point in the corresponding element,
and place the vertex at that point.
\item
In the third phase, for every pair of vertices we add an edge independently with probability $p$.
\end{enumerate}
Clearly this process generates random geometric graphs with the same distribution as before,
however it makes the analysis easier.

\begin{lemma}
\label{lem:twopointregions}
With probability $1-\exp \left( -\Omega(n) \right)$, after the first phase, there exist at least $e^{-8} m$ primary discs
containing exactly two vertices: one primary (the centre) and one secondary.
\end{lemma}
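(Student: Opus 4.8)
The plan is to let $Y_i$ be the indicator that primary disc $\mathcal{R}_i$ receives exactly one secondary vertex during the first phase, and to set $Y=\sum_{i=1}^m Y_i$. Since each $\mathcal{R}_i=C(v,1/\sqrt n)$ already contains its centre $v\in A$ and, by the definition of $A$, no other of the first $cn$ vertices, the event $\{Y_i=1\}$ is exactly the event that $\mathcal{R}_i$ contains precisely two vertices, one primary and one secondary. Thus it suffices to prove $Y\ge e^{-8}m$ with probability $1-\exp(-\Omega(n))$. First I would record the area $a_i$ of $\mathcal{R}_i$: since $1/\sqrt n\le 1/2$ for large $n$, Proposition~\ref{prop:intersection} gives $a_i\ge \pi/(4n)$, while trivially $a_i\le \pi/n$. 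In the first phase each secondary vertex lands in $\mathcal{R}_i$ independently with probability $a_i$, so the occupancy of $\mathcal{R}_i$ is distributed as $\mathrm{Bin}(n',a_i)$, whence $\p{Y_i=1}=n'a_i(1-a_i)^{n'-1}$.

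Next I would bound this probability below by an absolute constant. Using $n'=n(1-c)$ together with the two bounds on $a_i$, we have $n'a_i\ge (1-c)\pi/4$, and since $a_i\le \pi/n$ and $n'-1<n$, also $(1-a_i)^{n'-1}\ge (1-\pi/n)^{n}\to e^{-\pi}$. Therefore $\p{Y_i=1}\ge (1-c)\tfrac{\pi}{4}e^{-\pi}(1-o(1))$, and because $c<1/(16\pi)$ this exceeds a fixed $\gamma>e^{-8}$ for all large $n$; indeed the right-hand side is about $0.03$, comfortably above $e^{-8}\approx 3.3\times 10^{-4}$. Summing over the $m$ discs gives $\e{Y}\ge \gamma m$.

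Finally I would obtain the exponential tail by a bounded-differences argument. After the first phase $Y$ is a function of the $n'$ \emph{independent} bin-assignments of the secondary vertices; reassigning a single secondary vertex alters the occupancy of at most two parts of the partition, hence changes $Y$ by at most $2$. McDiarmid's inequality then yields $\p{Y\le \e{Y}-t}\le \exp(-t^2/(2n'))$. Taking $t=(\gamma-e^{-8})m=\Omega(n)$ and recalling $n'=\Theta(n)$, the inequality $\e{Y}-e^{-8}m\ge t$ gives $\p{Y< e^{-8}m}\le \exp(-\Omega(n))$, as required.

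The main obstacle is the concentration step rather than the expectation bound: the indicators $Y_i$ are \emph{not} independent, since the secondary vertices are allocated by a single multinomial, so a second-moment estimate would only give a polynomially small failure probability, whereas the lemma demands $\exp(-\Omega(n))$. Viewing $Y$ as a Lipschitz function of the independent per-vertex assignments, with Lipschitz constant $2$, is precisely what makes McDiarmid applicable and delivers the exponential bound. The other point requiring care is verifying that the per-disc probability genuinely clears the threshold $e^{-8}$, so that the deviation $t$ is linear in $n$ and the tail is exponential in $n$.
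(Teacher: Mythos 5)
Your proposal is correct and follows essentially the same route as the paper: bound the per-disc probability of receiving exactly one secondary vertex below a constant exceeding $e^{-8}$ using the area estimates from Proposition~\ref{prop:intersection}, then apply a bounded-differences inequality to the independent per-vertex assignments with Lipschitz constant $2$. The only slip is cosmetic: with Lipschitz constant $2$ the tail bound of Proposition~\ref{thm:mcdiarmid} reads $\exp\left(-t^2/(8n')\right)$ rather than $\exp\left(-t^2/(2n')\right)$, which does not affect the conclusion since $t=\Omega(n)$.
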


To prove this lemma we will use the following large deviation inequality,
which is Corollary~2.27 in Janson~et~al.~\cite{JLR}.

\begin{proposition}[\cite{JLR}]
\label{thm:mcdiarmid}
Let $Z_1,Z_2,\dots,Z_{n}$ be a sequence of independent random variables,
and suppose that the function $f$ satisfies
$$|f(x_1,x_2,\dots,x_{n}) - f(y_1,y_2,\dots,y_{n})| \leq c\:,$$
whenever the vectors $(x_1,x_2,\dots,x_{n})$ and $(y_1,y_2,\dots,y_{n})$
differ only in one of the coordinates.
Then,
$$\mathbf{Pr} [ f(Z_1,Z_2,\dots,Z_{n}) - \mathbf{E}[f(Z_1,Z_2,\dots,Z_{n})] < -t ] < \exp\left(\frac{-t^2}{2nc^2}\right)\:.$$
\end{proposition}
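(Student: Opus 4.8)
The plan is to prove this by the martingale method, obtaining the bound as an instance of the Azuma--Hoeffding inequality for a Doob martingale. Write $S = f(Z_1,\dots,Z_n)$ and, for $0 \le k \le n$, define $Y_k = \e{S \mid Z_1,\dots,Z_k}$, so that $Y_0 = \e{S}$ and $Y_n = S$. This is a martingale with respect to the sequence $Z_1,\dots,Z_n$, and its increments $D_k = Y_k - Y_{k-1}$ telescope to $S - \e{S} = \sum_{k=1}^n D_k$, with $\e{D_k \mid Z_1,\dots,Z_{k-1}} = 0$ by the tower property. The task then reduces to bounding the increments and assembling a tail bound from them.

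The key step is a conditional bound on each increment. Fixing values of $Z_1,\dots,Z_{k-1}$ and setting $g(z) = \e{f(Z_1,\dots,Z_{k-1},z,Z_{k+1},\dots,Z_n) \mid Z_1,\dots,Z_{k-1}}$, an expectation over the later, independent coordinates, I would show that $g$ has oscillation at most $c$: for any two values $z$ and $z'$, the vectors inside $f$ differ only in coordinate $k$, so the integrands differ by at most $c$ by hypothesis, whence $|g(z) - g(z')| \le c$. Here both the bounded-differences assumption and the independence of $Z_{k+1},\dots,Z_n$ from $Z_k$ are used, the latter to write $Y_k = g(Z_k)$ with $g$ integrating against a measure that does not depend on $z$. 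Since $Y_{k-1} = \e{g(Z_k) \mid Z_1,\dots,Z_{k-1}}$ lies in the range of $g$, it follows that $|D_k| = |g(Z_k) - Y_{k-1}| \le c$; that is, conditionally on $Z_1,\dots,Z_{k-1}$, the increment $D_k$ has mean zero and takes values in the interval $[-c,c]$.

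I would then apply Hoeffding's lemma conditionally: a zero-mean random variable supported in $[-c,c]$ has moment generating function at most $\exp(s^2 c^2 / 2)$. Thus $\e{e^{s D_k} \mid Z_1,\dots,Z_{k-1}} \le \exp(s^2 c^2/2)$ for every real $s$, and peeling the conditioning off one coordinate at a time via the tower property yields $\e{\exp(s(S - \e{S}))} = \e{\prod_{k=1}^n e^{s D_k}} \le \exp(n s^2 c^2/2)$.

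Finally I would optimise a Chernoff bound. For $s > 0$, Markov's inequality applied to $e^{-s(S-\e{S})}$ gives $\p{S - \e{S} < -t} \le e^{-st}\,\e{\exp(-s(S-\e{S}))} \le \exp(n s^2 c^2/2 - st)$; taking $s = t/(nc^2)$ produces the claimed bound $\exp(-t^2/(2nc^2))$. The main obstacle is the oscillation estimate of the second paragraph: one must verify carefully that integrating out the future coordinates $Z_{k+1},\dots,Z_n$ does not inflate the per-coordinate difference bound, which is exactly where independence is essential; the remaining steps are the standard exponential-moment machinery.
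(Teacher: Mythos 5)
Your proof is correct. The paper itself gives no proof of this proposition --- it is quoted verbatim as Corollary~2.27 of Janson, {\L}uczak and Ruci\'nski \cite{JLR} --- and your argument (Doob martingale, oscillation bound on the increments via independence of the remaining coordinates, conditional Hoeffding lemma, optimized Chernoff bound) is precisely the standard derivation used in that reference. Note that bounding each increment only by membership in $[-c,c]$ is exactly what produces the constant $\exp\left(-t^2/(2nc^2)\right)$ stated here, rather than the sharper $\exp\left(-2t^2/(nc^2)\right)$ one gets by exploiting that the increment in fact lies in an interval of length $c$; so your argument matches the stated bound with nothing to spare and nothing wasted.
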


\begin{proof}[Proof of Lemma~\ref{lem:twopointregions}.]
Recall that we have $n'$ secondary vertices,
and to each of them, independently, we randomly assign an element of the partition
$\{\mathcal{W},\mathcal{R}_1,\mathcal{R}_2,\dots,\mathcal{R}_m\}$,
with  probability proportional to the area of that element.
By Proposition~\ref{prop:intersection},
the area of each primary disc is between $\pi / 4n$ and $\pi / n$;
so for every $1\leq i \leq m$, the probability that $\mathcal{R}_i$ contains exactly one secondary vertex is at least
$$\binom{n'}{1} \: \frac{\pi}{4n} \: \left ( 1 - \frac{\pi}{n}\right)^{n'-1}
\geq \frac{\pi n'}{4n} \: \left ( \exp(-{2\pi}/{n})\right)^{n'-1}
\geq \frac{\pi n'}{4n} \exp \left ( -2 n' \pi /  n \right) \geq \frac{\pi}{8}\exp(- 2\pi)\:,$$
as $n' \geq n/2$. Let $p_1 = \pi\exp(-2 \pi ) / 8$.

Let $X$ be the number of primary discs that contain exactly one secondary vertex.
Since every primary disc contains exactly one primary vertex (its centre),
to prove the lemma we need to show that we have $X \geq e^{-8}m$ with probability $1-\exp \left( -\Omega(n) \right)$.
The calculation above shows that $\e{X} \geq m p_1$.
Since $p_1  > 2 e^{-8}$, showing that
$$\p{X < \e{X} / 2} \leq \exp \left( -\Omega(n) \right)$$
completes the proof of the lemma.

To every secondary vertex $v$ we assign a variable $Z_v$,
which equals $k$ if $v$ is assigned to $\mathcal{R}_k$ in the first phase,
and equals $0$ if $v$ is assigned to $\mathcal{W}$.
Since the assignment in the first phase is done independently,
the random variables $\{Z_v : v\mathrm{\ secondary}\}$ are independent.
Moreover, changing the value of $Z_v$ for a single vertex
will change $X$ by at most 2.
Hence by Proposition~\ref{thm:mcdiarmid} we have
\begin{align*}
\p {X < \e{X}/2} & \leq \exp \left( \frac{- \e{X}^2 / 4}{8n'} \right)  \leq \exp \left( \frac{- m^2 p_1^2 }{32n'} \right) = \exp \left( -\Omega(n) \right)\:.
\end{align*}
Therefore, with probability $1-\exp \left( -\Omega(n) \right)$, there exist at least $e^{-8} m$ primary discs
containing exactly two vertices.
\end{proof}

A primary disc $\mathcal{R}$ containing exactly two vertices $u$ and $v$ is called \emph{nice} if
the distance between $u$ and $v$ is less than $1 / \lambda \sqrt n$,
and $u$ and $v$ are not adjacent in $G$.
We claim that the existence of a nice primary disc $\mathcal{R}$ implies that the stretch factor of $G$ is larger than $\lambda$.
To see this, assume, by symmetry, that $u$ is the vertex of the centre of $\mathcal{R}$.
The (weighted) distance between $u$ and $v$ in $G$ is at least $1 / \sqrt n$,
since any $(u,v)$-path in $G$ must go out of $\mathcal{R}$ at the very first step.
However, the Euclidean distance between $u$ and $v$ is at most $1 / \lambda \sqrt n$,
and we have
$$\frac{d_G(u,v)}{d(u,v)}  > \lambda\:.$$

Theorem~\ref{thm:lower_bound} follows immediately from the following lemma.

\begin{lemma}
\label{lem:lower_bound}
For any positive $\lambda$ we have
$$\p{\mathcal{F}(n,p) < \lambda } \leq \exp \left[ - \frac{c n (1-p)}{2e^8 \lambda^2}\right] + o(1) \:.$$
\end{lemma}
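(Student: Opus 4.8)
The plan is to bound the ``bad'' event $\{F<\lambda\}$ by the event that \emph{none} of the two-point primary discs produced after the first phase is nice, and then to exploit the conditional independence of the niceness events across discs. First note that if $\lambda\le 1$ the claim is trivial, since $F\ge 1$ whenever it is defined, so $\p{F<\lambda}=0$; hence I assume $\lambda\ge 1$. Recall from the discussion preceding the lemma that the existence of a single nice primary disc forces $F>\lambda$ (and if $G$ is disconnected then $F$ is undefined, so in either case $\{F<\lambda\}$ fails). Consequently $\{F<\lambda\}$ is contained in the event that no two-point primary disc is nice, and it suffices to bound the probability of the latter.

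I would use two layers of conditioning. Let $E_1$ be the event of Lemma~\ref{lem:isolates} that the auxiliary graph on the first $cn$ embedded vertices has at least $cn/2=m$ isolated vertices; then $\p{\overline{E_1}}=O(1/n)$. Conditioning on a good embedding of the first $cn$ vertices, let $E_2$ be the event of Lemma~\ref{lem:twopointregions} that after the first phase at least $e^{-8}m$ primary discs contain exactly two vertices; then $\p{\overline{E_2}}=\exp(-\Omega(n))$. On $\overline{E_1}$ or $\overline{E_2}$ I simply give up, contributing $O(1/n)+\exp(-\Omega(n))=o(1)$ to the final bound. It remains to bound, conditional on a good embedding and on the first-phase assignment (which determines exactly which discs are two-point discs and, on $E_2$, that there are at least $e^{-8}m$ of them), the probability that none of these discs is nice.

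Fix a two-point disc $\mathcal R=C(u,1/\sqrt n)$ with centre $u$ and unique secondary vertex $v$. Given the first-phase assignment, $v$ is placed uniformly in $\mathcal R$ in the second phase, and the edge $uv$ is present with probability $p$ independently in the third phase, so $\mathcal R$ is nice with probability
$$ \frac{\mathrm{area}\,C(u,1/\lambda\sqrt n)}{\mathrm{area}\,C(u,1/\sqrt n)}\,(1-p). $$
The ratio of areas is the crux of the estimate, and I would bound it below by $1/\lambda^2$ by a scaling argument. The map $x\mapsto u+\lambda^{-1}(x-u)$ has Jacobian $\lambda^{-2}$, carries the disc of radius $1/\sqrt n$ about $u$ onto the disc of radius $1/\lambda\sqrt n$ about $u$, and (since $\lambda\ge 1$ and $u$ lies in the square) carries the unit square into itself; hence it maps $C(u,1/\sqrt n)$ into a subset of $C(u,1/\lambda\sqrt n)$, giving $\mathrm{area}\,C(u,1/\lambda\sqrt n)\ge \lambda^{-2}\,\mathrm{area}\,C(u,1/\sqrt n)$. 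Therefore each two-point disc is nice with probability at least $(1-p)/\lambda^2$.

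Finally, the niceness events of distinct two-point discs depend on disjoint second-phase placements and disjoint third-phase edges, so conditional on the first-phase assignment they are independent. On $E_2$ there are at least $e^{-8}m=e^{-8}cn/2$ such discs, whence the probability that none is nice is at most
$$ \left(1-\frac{1-p}{\lambda^2}\right)^{e^{-8}cn/2}\le \exp\!\left(-\frac{1-p}{\lambda^2}\cdot\frac{e^{-8}cn}{2}\right)=\exp\!\left(-\frac{cn(1-p)}{2e^8\lambda^2}\right). $$
Combining this with the $o(1)$ failure probabilities of $E_1$ and $E_2$ yields the stated bound. The main obstacle is the geometric ratio: the crude estimate from Proposition~\ref{prop:intersection} only gives $1/(4\lambda^2)$ and loses the constant, so the scaling argument (exact in the interior, and only better near the boundary) is what is needed to reach $2e^8\lambda^2$ in the exponent; the other delicate point is phrasing the conditioning so that the per-disc events are genuinely independent.
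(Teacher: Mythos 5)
Your proposal is correct and follows essentially the same route as the paper's proof: condition on the events of Lemmas~\ref{lem:isolates} and~\ref{lem:twopointregions}, lower-bound the niceness probability of each two-point disc by $(1-p)/\lambda^2$, and use independence across discs to get the exponential bound. Your scaling map $x\mapsto u+\lambda^{-1}(x-u)$ is a clean justification of the area-ratio bound that the paper only asserts (``even if the disc \dots is not wholly contained in the unit square''), and the explicit dismissal of the case $\lambda\le 1$ is a harmless extra care.
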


\begin{proof}
Consider a primary disc $\mathcal{R}$ such that after the first phase,
it has been determined that $\mathcal{R}$ contains exactly two vertices, $u$ and $v$, where $u$ is the centre of $\mathcal{R}$.
Then for $\mathcal{R}$ to be nice, in the second phase
$v$ should be placed in $C(u, 1 / \lambda \sqrt n)$,
and in the third phase $u$ and $v$ should become nonadjacent in $G$.
The probability of the former is at least
$$\left(\frac{(1 / \lambda \sqrt n)^2}{(1/\sqrt n)^2} \right) = 1 / \lambda^2\:,$$
even if the disc of radius $1/\sqrt n$ centred at $u$ is not wholly contained in the unit square,
and the probability of the latter is $1-p$.
These two events are independent, so the probability that $\mathcal{R}$ is not nice is
at most $1-(1-p)  \lambda ^{-2}$.

By Lemma~\ref{lem:twopointregions}, once the first phase finishes, with probability $1 - \exp(-\Omega(n))$
there exists a set $\mathcal{B}$ of at least $e^{-8}m$ primary discs,
such that each primary disc in $\mathcal{B}$ contains exactly two vertices.
We condition on this event in the following.
The crucial observation is that the events happening during the second and third phases
inside each primary disc in $\mathcal{B}$ are independent of the others.
In particular, the events
$$\{\mathcal{R}\mathrm{\ is\ nice} : \mathcal{R}\in \mathcal{B}\}$$
are mutually independent;
hence the probability that none of the primary discs in $\mathcal{B}$ become nice during the second and third phases, is at most
$$\left [ 1-(1-p)  \lambda ^{-2} \right]^{e^{-8} m} \leq \exp \left[ - \frac{c n (1-p)}{2e^8 \lambda^2}\right]\:,$$
so that
$$\p{\mathcal{F}(n,p) < \lambda } \leq \exp \left[ - \frac{c n (1-p)}{2e^8 \lambda^2}\right] + O(1/n) + \exp(-\Omega(n))\:.\qedhere$$
\end{proof}

Theorem~\ref{thm:tightness} follows from Lemma~\ref{lem:lower_bound} by
putting $\lambda = C' \sqrt{n(1-p)}$ for a suitable constant $C'$,
noting that $p = \Omega(1)$ and $n(1-p) = \Omega(1)$.

\section{The Upper Bound}
In this section we prove Theorem~\ref{thm:upper_bound}.
We will use the following version of the Chernoff bound.
This is Theorem 2.1 in Janson~et~al.~\cite{JLR}.

\begin{proposition}[\cite{JLR}]
\label{prop_chernoff}
Let $X = X_1 + \dots + X_m$,
where the $X_i$ are independent identically distributed indicator random variables.
Then for any $ \epsilon \geq 0$,
$$\p{X \leq (1-\epsilon) \e{X}} \leq \exp (-\epsilon^2 \e{X} / 2 )\:.$$
\end{proposition}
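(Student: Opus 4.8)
The plan is to use the exponential moment method (the standard Chernoff trick): since each $X_i$ is an indicator, write $p := \e{X_1}$ so that $\e{X} = mp =: \mu$, and bound the lower tail via Markov's inequality applied to $e^{-tX}$ for a free parameter $t \geq 0$. Concretely, for every $t \geq 0$,
$$\p{X \leq (1-\epsilon)\mu} = \p{e^{-tX} \geq e^{-t(1-\epsilon)\mu}} \leq e^{t(1-\epsilon)\mu}\,\e{e^{-tX}},$$
so the whole problem reduces to controlling the moment generating function $\e{e^{-tX}}$ and then choosing $t$ well.

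First I would exploit independence and the identical distribution of the $X_i$ to factor the moment generating function, $\e{e^{-tX}} = \prod_{i=1}^m \e{e^{-tX_i}} = \left(\e{e^{-tX_1}}\right)^m$. Since $X_1$ is $\mathrm{Bernoulli}(p)$, we have $\e{e^{-tX_1}} = 1 - p(1 - e^{-t})$, and the elementary inequality $1 + x \leq e^x$ gives $\e{e^{-tX_1}} \leq \exp\!\left(-p(1 - e^{-t})\right)$, whence $\e{e^{-tX}} \leq \exp\!\left(-\mu(1 - e^{-t})\right)$. Substituting back, the tail is at most $\exp\!\left(\mu\left[t(1-\epsilon) - 1 + e^{-t}\right]\right)$. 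The bracket is convex in $t$ (its second derivative is $e^{-t} > 0$), so its unique critical point is the minimizer: setting the derivative to zero yields $e^{-t} = 1-\epsilon$, i.e. $t = \ln\frac{1}{1-\epsilon}$, which is nonnegative precisely when $0 \leq \epsilon < 1$. Plugging this choice in collapses the bound to the sharp form
$$\p{X \leq (1-\epsilon)\mu} \leq \exp\!\left(-\mu\bigl[(1-\epsilon)\ln(1-\epsilon) + \epsilon\bigr]\right).$$

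The hard part—really the only step that is not purely mechanical—is passing from this sharp exponent to the clean bound $\exp(-\epsilon^2\mu/2)$ in the statement. This amounts to the calculus inequality $(1-\epsilon)\ln(1-\epsilon) + \epsilon \geq \epsilon^2/2$ on $[0,1)$. I would prove it by setting $h(\epsilon) = (1-\epsilon)\ln(1-\epsilon) + \epsilon - \epsilon^2/2$ and checking $h(0) = 0$, $h'(\epsilon) = -\ln(1-\epsilon) - \epsilon$ with $h'(0) = 0$, and $h''(\epsilon) = \epsilon/(1-\epsilon) \geq 0$; these force $h' \geq 0$ and then $h \geq 0$, giving the inequality.

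Finally, the range $\epsilon \geq 1$ must be dispatched separately, since the optimal $t$ above blows up there. For $\epsilon > 1$ the bound is vacuous because $X \geq 0$ forces $\p{X \leq (1-\epsilon)\mu} = 0$, and for $\epsilon = 1$ the event is $\{X = 0\}$, whose probability $(1-p)^m \leq e^{-\mu} \leq e^{-\mu/2}$ meets the required bound. Combining the three ranges completes the proof for all $\epsilon \geq 0$.
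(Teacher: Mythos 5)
The paper gives no proof of this proposition --- it is quoted verbatim from Janson, {\L}uczak and Ruci\'nski \cite{JLR} (their Theorem~2.1) --- so there is nothing internal to compare against; your argument is the standard exponential-moment (Chernoff) derivation, which is essentially how the cited source proves it. The proof is correct and complete: the Markov step, the factorization of $\e{e^{-tX}}$, the optimization $e^{-t}=1-\epsilon$, the calculus inequality $(1-\epsilon)\ln(1-\epsilon)+\epsilon\geq \epsilon^2/2$, and the separate treatment of $\epsilon\geq 1$ all check out (the only microscopic caveat being that for $\epsilon>1$ the probability is $0$ only when $\e{X}>0$; when $\e{X}=0$ both sides equal $1$, so the bound still holds).
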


\begin{lemma}
\label{lem:integration}
For any positive $\lambda$,
$$\p{F > 2\lambda + 1} \leq n^2 \left [\exp\left(- \frac{p^2n}{16}\right) + \frac{128 (1-p) }{p^2 n \lambda^2} \right]\:.$$
\end{lemma}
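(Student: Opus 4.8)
The plan is to take a union bound over the at most $\binom{n}{2}<n^2$ pairs of vertices, so it suffices to show that for a fixed pair $u,v$ the probability of $d_G(u,v)>(2\lambda+1)\,d(u,v)$ is at most the bracketed quantity. I condition throughout on the embedded positions of $u$ and $v$ and write $d=d(u,v)$ for their (now fixed) Euclidean distance. The starting point is a sufficient condition for this pair \emph{not} to witness a large stretch: if the edge $uv$ is present then $d_G(u,v)=d$; and if some third vertex $w$ satisfies $d(u,w)\le\lambda d$ and is adjacent to both $u$ and $v$, then the path $u,w,v$ has length $d(u,w)+d(w,v)\le\lambda d+(\lambda+1)d=(2\lambda+1)d$, by the triangle inequality. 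Hence
$$\{d_G(u,v)>(2\lambda+1)d\}\subseteq\{uv\notin E(G)\}\cap\{\text{no such }w\text{ exists}\}.$$

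Call an eligible vertex $w\notin\{u,v\}$ a \emph{connector} if $w\in C(u,\lambda d)$ and $uw,wv\in E(G)$. Ranging over the $n-2$ remaining vertices, the events ``$w$ is a connector'' are independent and identically distributed, since each involves only the position of its own vertex together with the two edges joining it to $u$ and $v$, and these data are disjoint across vertices and disjoint from the edge $uv$; in particular they are independent of $\{uv\notin E(G)\}$. Writing $X$ for the number of connectors, I therefore get $\p{d_G(u,v)>(2\lambda+1)d\mid d}\le(1-p)\,\p{X=0\mid d}$. Each connector probability equals $p^2\,\p{w\in C(u,\lambda d)}$, and Proposition~\ref{prop:intersection} bounds $\p{w\in C(u,\lambda d)}$ below by $\pi(\lambda d)^2/32$ when $\lambda d\le\sqrt2$ and by $\pi/16$ when $\lambda d>\sqrt2$, uniformly in the position of $u$ (this is exactly what lets me ignore boundary effects). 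Since $X$ is a sum of i.i.d.\ indicators, the Chernoff bound (Proposition~\ref{prop_chernoff}) with $\epsilon=1$ gives $\p{X=0\mid d}\le\exp(-\e{X\mid d}/2)$, where $\e{X\mid d}=(n-2)p^2\,\p{w\in C(u,\lambda d)}$ is of order $np^2\lambda^2d^2$ when $\lambda d\le\sqrt2$ and of order $np^2$ when $\lambda d>\sqrt2$.

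It remains to integrate the conditional bound over the law of $d$, which is where the name of the lemma comes from. I use the position-free estimate $\p{d(u,v)\le t}\le\pi t^2$ (for fixed $u$, the disc $C(u,t)$ has area at most $\pi t^2$) and split at $\lambda d=\sqrt2$. On the tail $\lambda d>\sqrt2$ the conditional probability is at most $\exp(-\Theta(p^2n))$ uniformly in $d$, and after bounding $(1-p)\le1$ this contributes the term $\exp(-p^2n/16)$. On $\lambda d\le\sqrt2$ the conditional bound has the form $(1-p)\exp(-\beta d^2)$ with $\beta=\Theta(p^2n\lambda^2)$; integrating against the distance distribution, most cleanly by parts using the CDF bound above, reduces everything to $\int_0^\infty d^3e^{-\beta d^2}\,dd=1/(2\beta^2)$ and yields a contribution of order $(1-p)/(p^2n\lambda^2)$. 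Keeping track of the constants (the $32$ from Proposition~\ref{prop:intersection} and the passage from $n-2$ to $n$) gives precisely $128(1-p)/(p^2n\lambda^2)$; multiplying the per-pair bound by $n^2$ finishes the proof.

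The main obstacle is this last step: choosing the two-hop connector region to be the disc $C(u,\lambda d)$ so that the triangle inequality yields exactly the budget $(2\lambda+1)d$ while Proposition~\ref{prop:intersection} supplies an area bound independent of where $u$ sits, and then performing the integration over $d$ with the correct tail split, which is what simultaneously converts the exponential conditional estimate into the polynomial $1/\lambda^2$ term and isolates the $\exp(-p^2n/16)$ term. The only other point needing care is verifying the independent, identically distributed structure of the connector events, together with their independence from the absence of $uv$, which is what makes $\p{X=0\mid d}$ factorize and lets the Chernoff bound apply cleanly.
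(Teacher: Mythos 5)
Your proof is correct, and it rests on the same key idea as the paper's: a common neighbour $w$ of $u,v$ lying in $C(u,\lambda d(u,v))$ certifies $d_G(u,v)\le(2\lambda+1)d(u,v)$, Proposition~\ref{prop:intersection} supplies a position-free lower bound on the area of that region, and one integrates over $d(u,v)$ before taking a union bound over the $n^2$ pairs. The technical route differs in two respects. First, the paper separates the two sources of randomness: it exposes only the edges and applies Chernoff to conclude that $u,v$ have at least $p^2n/4$ common neighbours except with probability $\exp(-p^2n/16)$ (this is where its exponential term comes from), and only then uses the independent positions of those common neighbours to obtain the factor $(1-a(u,\lambda t))^{p^2n/4}$. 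You instead fold position and adjacency into a single i.i.d.\ indicator (your ``connector'') and apply Chernoff once, conditionally on $d$; your exponential term then arises from the geometric tail $\lambda d>\sqrt 2$ together with the boundary term of the integration by parts, not from a shortage of common neighbours. Second, the paper evaluates the integral over $d$ by discretizing and passing to the limit with the radial density, whereas you integrate against the CDF bound $\p{d(u,v)\le t}\le\pi t^2$ by parts; both yield the same $128(1-p)/(p^2n\lambda^2)$, and your version avoids the limit argument and the double conditioning. The one point you should make explicit is that your two $\exp(-\Theta(p^2n))$ contributions really do sum to at most $\exp(-p^2n/16)$: each is of the form $\exp(-(n-2)p^2\pi/32)$, so you need $(n-2)p^2\pi/32\ge p^2n/16+\log 2$, which holds for large $n$ provided $p^2n$ is not too small --- and when $p^2n\le 32\log n$ the right-hand side of the lemma exceeds $1$ and the statement is vacuous, so nothing is lost.
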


\begin{proof}
Say a pair $(u,v)$ of vertices is \emph{bad} if $d_G(u,v) > (2\lambda+1) d(u,v)$.
Let $u$ and $v$ be arbitrary vertices.
First, we show that with probability at least $1 - \exp(-p^2n / 16)$,
$u$ and $v$ have at least $p^2 n / 4$ common neighbours.
The expected number of common neighbours of $u$ and $v$ is $p^2 (n-2) > p^2 n /2$,
and since the edges appear independently,
by Proposition~\ref{prop_chernoff}, the probability that $u$ and $v$ have less than $p^2 n / 4$ common neighbours is less than
$\exp(-p^2 n / 16)$.
In the following, we condition on the event that $u$ and $v$ have at least $p^2 n /4$ common neighbours.

Now, consider the random embedding of the graph.
For any $t\geq 0$, if $u$ and $v$ are adjacent, or if
$d(u,v) \geq t$ and $u$ and $v$ have a common neighbour $w$ with $d(u,w) \leq \lambda t$,
then we would have $d_G(u,v) \leq (2\lambda+1) d(u,v)$ so the pair is not bad.
To give an upper bound for the probability of badness of the pair,
we compute the probability that $u$ and $v$ are nonadjacent, and
$i h \leq d(u,v) \leq (i+1) h$, and they have no common neighbour $w$ with $d(u,w) \leq \lambda ih$,
and sum over $i$.

Let us condition on the embedding of vertex $u$, and
denote by $a(u,s)$ the area of the set of points in the unit square at distance at most $s$ from $u$, and let $q=1-p$.
Then since $u$ and $v$ have at least $p^2n/4$ common neighbours and these common neighbours are embedded independently, for any $h>0$
$$\p{(u,v)\ \mathrm{bad}} \leq  \sum_{i=0}^{\lfloor 2 / h \rfloor } q \left[a(u,(i+1)h ) - a(u,ih)\right] (1-a(u,\lambda i h))^{p^2n / 4}\:.$$
Note that $\lim_{h\rightarrow 0}\frac { a(u,(i+1)h) - a(u,ih)}{h} \leq 2\pi ih $ and also for $ih > \sqrt 2 / \lambda$ the summand is zero. 
Hence, letting $t = ih$ and taking the limit,
$$\p{(u,v)\ \mathrm{bad}} \leq \int_{t=0}^{\sqrt 2 / \lambda} 2 \pi q t  (1-a(u,\lambda t))^{p^2n / 4}\:\mathrm{d}t\:.$$
By Proposition~\ref{prop:intersection}, $a(u,\lambda t) \geq \pi (\lambda t)^2/32$. Hence
\begin{align*}
\p{(u,v)\ \mathrm{bad}} & \leq \int_{t=0}^{\sqrt 2 / \lambda} 2 \pi q t  (1- \pi (\lambda t)^2/32)^{p^2n / 4}\:\mathrm{d}t =
\left. -\frac{128q}{(p^2n + 4)\lambda^2}\left(1-\frac{\pi \lambda^2 t^2}{32}\right)^{1 + \frac{p^2n}{4}} \right|_{t=0}^{\sqrt{2}/\lambda}\\
& = \frac{128q}{(p^2n + 4)\lambda^2}\left [ 1 - \left(1-\frac{\pi}{8}\right)^{1 + \frac{p^2n}{4}} \right]
< \frac{128q}{p^2n \lambda^2}\:.
\end{align*}
So by the union bound
$$\p{F>2\lambda+1} = \p{\exists\ \mathrm{a\ bad\ pair}} \leq n^2 \left [\exp\left(- \frac{p^2n}{16}\right) + \frac{128 (1-p) }{p^2 n \lambda^2} \right]\:.\qedhere$$
\end{proof}

We are now ready to prove Theorem~\ref{thm:upper_bound}.

\begin{proof}[Proof of Theorem~\ref{thm:upper_bound}]
We need to show that
if $p^2n \geq 33 \log n$ and $w(n)=\omega(1)$, then  asymptotically almost surely we have
$$F \leq \frac {w(n) \sqrt{n(1-p)}}{p} +1\:,$$
and that if $p^2n \geq 113 \log n$, then
$$\e{F|\:CON} \leq 1 + {\frac{\sqrt{2048 n(1-p)}}{p}} + o(1)\:.$$

For the first part,
let $\lambda = \frac {w(n) \sqrt{n(1-p)}}{2p}$.
By Lemma~\ref{lem:integration},
$$\p{F > \frac {w(n) \sqrt{n(1-p)}}{p} +1} = \p{F > 2\lambda + 1}
\leq n^2 \exp\left(- \frac{p^2n}{16}\right) + \frac{128 n(1-p) }{p^2 \lambda^2} = o(1) \:.$$
Thus, asymptotically almost surely $F \leq \frac {w(n) \sqrt{n(1-p)}}{p} +1$.

For the second part,
let $\epsilon = \sqrt{\frac{512 n(1-p) }{p^2}}$.
Since $F$ is nonnegative, we have
\begin{align}
\mathbf{E} [F|\: CON] & = \int_{s =0}^{\infty} \mathbf{Pr} [F > s |\: CON] \:\mathrm{d} s \nonumber \\
& = \int_{s =0}^{1+\epsilon} \mathbf{Pr} [F > s |\: CON] \:\mathrm{d} s
+ \int_{s =1+\epsilon}^{n^5} \mathbf{Pr} [F > s |\: CON] \:\mathrm{d} s
+ \int_{s =n^5}^{\infty} \mathbf{Pr} [F > s |\: CON] \:\mathrm{d} s.
\label{eq:EF2}
\end{align}
Clearly,
$$\int_{s =0}^{1+\epsilon} \mathbf{Pr} [F > s |\: CON] \:\mathrm{d} s \leq 1 + \epsilon\:.$$

Assuming the graph is connected, for any pair $\{u,v\}$ of vertices  there is a path having at most $n-1$ edges joining them,
so $d_{G}(u,v) < n \sqrt 2$.
Hence, if $F > s$, then there exists a pair $\{u,v\}$ of nonadjacent vertices with $d(u,v) < n\sqrt 2 / s$.
Let $u$ be a fixed vertex.
Let $A_u$ be the event that there exists a vertex,
not adjacent to $u$, at distance less than $n\sqrt 2 / s$ from $u$.
By the union bound,
$\mathbf{Pr}[A_u] < 2 \pi n^3 (1-p) / s^2$.
The probability that there exists a vertex $u$ for which $A_u$
happens is by the union bound less than
$2 \pi n^4 (1-p) / s^2$.
Therefore,
$$ \int_{s =n^5}^{\infty} \mathbf{Pr} [F > s|\:CON]\: \mathrm{d}s
\leq \int_{s =n^5}^{\infty} \frac{2 \pi n^4 (1-p)}{s^2}\: \mathrm{d}s
 = o(1)\:.$$

We will now bound the second term in the right hand side of (\ref{eq:EF2}).
Notice that for any fixed embedding of the vertices,
the event ``$F > \lambda$'' is a decreasing property (with respect to the edges in the graph),
and the event ``$G$ is connected'' is an increasing one.
Hence by the correlation inequalities (see, e.g., Theorem~6.3.3 in Alon and Spencer~\cite{alon_spencer}) we have
$$\p{F > \lambda |\: CON} \leq \p{F>\lambda}\:.$$

Let $\lambda = (s-1) / 2$. Then by Lemma~\ref{lem:integration}
\begin{align*}
\p{F > s} & \leq n^2 \exp\left(- \frac{p^2n}{16}\right) + \frac{128 n(1-p) }{p^2 \lambda^2} \:.
\end{align*}
Therefore, since $p^2n > 113 \log n$,
\begin{align*}
\int_{s=1+\epsilon}^{n^5} \p{F > s} \:\mathrm{d} s
& \leq \int_{\lambda=\epsilon/2}^{n^5/2} 2n^2 \exp\left(- \frac{p^2n}{16}\right) + \frac{256 n(1-p) }{p^2 \lambda^2} \:\mathrm{d} \lambda\\
& \leq n^7 \exp\left(- \frac{p^2n}{16}\right) + \left. \frac{-256 n(1-p) }{p^2 \lambda} \right|_{\lambda = \epsilon/2}^{n^5/2} \\
& \leq o(1) + \frac{512 n(1-p) }{p^2 \epsilon} = o(1) + \epsilon\:.
\end{align*}
Consequently,
\begin{align*}
\mathbf{E} [F|\: CON] & = \int_{s =0}^{1+\epsilon} \mathbf{Pr} [F > s |\: CON] \:\mathrm{d} s +
\int_{s =1+\epsilon}^{n^5} \mathbf{Pr} [F > s | \:CON] \:ds + \int_{s =n^5}^{\infty} \mathbf{Pr} [F > s |\: CON] \:\mathrm{d} s \\
& \leq 1 + 2\epsilon + o(1) = 1 + {\frac{\sqrt{2048 n(1-p) }}{p}} + o(1)\:.\qedhere
\end{align*}
\end{proof}

%
%

\end{document}